\newtheorem{theorem}{Theorem}
\newcommand\encircle[1]{%
  \tikz[baseline=(X.base)] 
    \node (X) [draw, shape=circle, inner sep=0] {\strut #1};}
\tikzset{->-/.style={decoration={
  markings,
  mark=at position #1 with {\pgftransformscale{1.5}\arrow{>}}},postaction={decorate}}}
 \tikzset{-<-/.style={decoration={
  markings,
  mark=at position #1 with {\pgftransformscale{1.5}\arrow{<}}},postaction={decorate}}}
\def\BibTeX{{\rm B\kern-.05em{\sc i\kern-.025em b}\kern-.08em
    T\kern-.1667em\lower.7ex\hbox{E}\kern-.125emX}}
\newcommand{\figref}[1]{Fig.~\ref{#1}}  
\DeclareMathOperator{\E}{E}
\def\E{\mathbb{E}}
\def\Pr{\mathbb{P}}
\begin{document}

\onecolumn

\title{Compressed Error HARQ: Feedback Communication on Noise-Asymmetric Channels\\
\thanks{Code can be found at: \href{https://github.com/sravan-ankireddy/nams}{https://github.com/sravan-ankireddy/ce-harq}}
}

\makeatletter \renewcommand\AB@affilsepx{\hfill \protect\Affilfont} \makeatother

\author[1]{Sravan Kumar Ankireddy \thanks{Correspondence to: Sravan <sravan.ankireddy@utexas.edu>}}
\author[2]{S. Ashwin Hebbar }
\author[3]{Yihan Jiang}
\author[1]{Hyeji Kim}
\author[2]{Pramod Viswanath}
\affil[1]{University of Texas at Austin} 
\affil[2]{Princeton University} 
\affil[3]{Aira Technologies}
\renewcommand\Authands{, }

\maketitle

\begin{abstract}
In modern communication systems with feedback, there are increasingly more scenarios where the transmitter has much less power than the receiver (e.g., medical implant devices), which we refer to as {\em noise-asymmetric channels}. For such channels, the feedback link is of higher quality than the forward link. However, feedback schemes for cellular communications, such as hybrid ARQ, do not fully utilize the high-quality feedback link. 
To this end, we introduce {\em Compressed Error Hybrid ARQ}, a generalization of hybrid ARQ tailored for noise-asymmetric channels;  the receiver sends its estimated message to the transmitter, and the transmitter harmoniously switches between hybrid ARQ and compressed error retransmission. 
We show that our proposed method significantly improves reliability, latency, and spectral efficiency compared to the conventional hybrid ARQ in various practical scenarios where the transmitter is resource-constrained. 

\end{abstract}


\section{Introduction}\label{sec:intro}

With the advance of low-powered devices, there are increasingly more communication scenarios where the transmitter has much less power than the receiver. 
%
For example, low-powered \emph{medical implant devices} send periodic biological measurements to a power-supplied machine; \emph{Internet of Things (IoT) sensors} with limited battery capacity send various information to a high-powered device, e.g., Alexa; \emph{Deep Space Satellite} uses limited solar energy to send data to the receiving station on earth. In all these examples, the transmitter is power-limited, while the receiver operates at a much higher power level~\cite{heydon2012bluetooth}. 
We let \emph{noise-asymmetric channels} denote an extreme class of such channels, where the forward channel is noisy, and the feedback channel is noise-free as depicted in Fig \ref{fig:fwd_fb_model}.

\begin{figure}[ht]
    \centering
 	\includegraphics[width=0.7\linewidth]{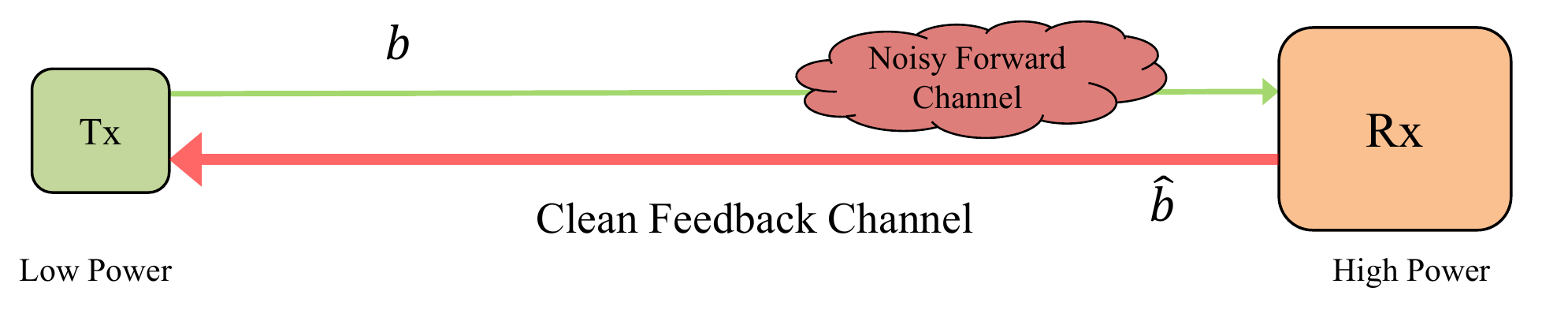}
 	\captionsetup{font=small}
 	\caption{Noise-Asymmetric Channels}
 	\label{fig:fwd_fb_model}
\end{figure}

To leverage the asymmetry in the channel, it is crucial to utilize the clean feedback channel from the receiver to the transmitter. 
%
%
%
%
Feedback schemes used in cellular communications, such as Automatic repeat request (ARQ) and Hybrid ARQ (HARQ), however, does not fully utilize the clean feedback channel. %
%
In the (Hybrid) ARQ, the receiver sends a single-bit feedback, an ACKnowledgment (ACK) or a Negative ACK (NACK), based on the Cyclic Redundancy Checks (CRC). 
The transmitter retransmits the entire coded message block (or a pre-arranged coded block) if a NACK is received. 

We focus on designing a {\em feedback communication system} for {\em noise-asymmetric channels}, which successfully leverages the high-quality feedback channels. 
Our method generalizes the HARQ based on the following insight: when the receiver's estimate has few errors, it is beneficial to transmit the error vector, which is sparse and compressible, instead of retransmitting the entire message. 
 The framework of 
 error transmission is first introduced by Ooi and Wornell in~\cite{ooi1998fast} and is recently explored in~\cite{perotti2021accumulative}. We combine this idea and HARQ to build a feedback communication algorithm for noise-asymmetric channels. (An extensive summary of existing feedback schemes~\cite{schalkwijk1966coding, chance2011concatenated, ben2017interactive, ooi1998fast, perotti2021accumulative,kim2018deepcode, safavi2021deep,  mashhadi2021drf, shao2022attentioncode} and their comparison with our work are provided in Section~\ref{sec:related}.)

Our main contributions are summarized as follows. \vspace{.2em}

\begin{itemize}
    \item \noindent{\em Algorithm design:\ } We propose the Compressed Error HARQ (CE-HARQ), a novel feedback communication algorithm for noise-asymmetric channels with block-level feedback, 
to facilitate low-latency communications in low-powered devices. Our scheme seamlessly switches between the HARQ and error compression based on analytical criteria.
\vspace{.2em}

    \item \vspace{.2em}
\noindent{\em Numerical evaluations:\ } 
We achieve significant gains over the conventional HARQ scheme on a wide range of practical resource-constrained scenarios. For low-complexity coding schemes such as convolutional codes, CE-HARQ achieves up to $2.5$ dB gain over conventional HARQ. Additionally, we also demonstrate improvements in latency and spectral efficiency.
\vspace{.2em}

    \item \vspace{.2em}
\noindent{\em Practical considerations: } 
Our design of CE-HARQ takes into consideration various resource constraints, such as systems where the redesign of the PHY layer is prohibitive. This is done by providing the flexibility to separate the channel coding into PHY and MAC layers. As a result, our algorithm is compatible with and can be smoothly applied to low-resource applications like IoT and Bluetooth communications. Additionally, we add fallback mechanisms and guidelines for the practical deployment of such systems.
\end{itemize}

\section{System Model}\label{sec:sysmodel}

We consider a block-feedback system with an Additive White Gaussian Noise (AWGN) forward channel and a noiseless feedback channel, with active feedback and BPSK modulation. The aim is to communicate a message vector $\mathbf{u} \in \{0,1\}^K$ over a maximum of $D$ rounds.

At round $i$, the encoder $\phi^{(i)}$ takes as input the message $\mathbf{u}$ and the feedback from previous round $\hat{\mathbf{u}}^{(i-1)}$, which is the message estimate, and produces $N$ symbols to be transmitted.
\begin{align}
    \mathbf{x}^{(i)} = \phi^{(i)}(\mathbf{u},\hat{\mathbf{u}}^{(i-1)}),
\end{align}
where $\mathbf{x}^{(i)} \in \{0,1\}^N$. 
The transmitter then sends the encoded vector over the forward channel as 
\begin{align}
    \mathbf{y}^{(i)} = \mathbf{x}^{(i)} + \mathbf{n}^{(i)},
\end{align}
where $\mathbf{n}^{(i)} \sim \mathcal{N}(0, \sigma^2 I) \in \mathbb{R}^N
$ is the noise vector and $\mathbf{y}^{(i)}\in \mathbb{R}^N$ is the received vector.

At the receiver, the decoder $\psi^{(i)}$ takes as input the received vector $\mathbf{y}^{(i)}$ and the previous message estimate $\hat{\mathbf{u}}^{(i-1)}$, to produce the new estimate $\mathbf{u}^{(i)}$ given by
\begin{align}
    \hat{\mathbf{u}}^{(i)} = \psi^{(i)}(\mathbf{y}^{(i)},\hat{\mathbf{u}}^{(i-1)}),
\end{align}
where $\mathbf{u}^{(i)} \in \{0,1\}^K$. The receiver then sends the new message estimate $\mathbf{u}^{(i)}$ to the transmitter, which is used for generating the next transmission $\mathbf{x}^{(i+1)}$. We reiterate that the feedback sequence $\mathbf{u}^{(i)} \in \{0,1\}^N$ is restricted to a binary sequence, making it  practical for resource-constrained settings and comparable to traditional HARQ schemes.  

This process is continued until an error-free decoding or a maximum number of rounds $D$ is reached. The objective is to design a pair of encoder-decoder $\{\phi^{(i)},\psi^{(i)}\}_{i=1}^D$ that minimizes the probability of error $\Pr\{\mathbf{u} \neq \hat{\mathbf{u}}^{(D)}\}$ for a given maximum number of rounds $D$.

\section{Compressed Error HARQ}\label{sec:algo}

We introduce the Compressed Error HARQ (CE-HARQ) scheme, which generalizes the HARQ to leverage the feedback of the {\em estimated message} from the receiver to the transmitter. 
The key innovation of CE-HARQ is achieving a very low Forward Error Correction (FEC) rate for retransmissions by compressing the {difference} between the message and the estimated message, i.e., {\em error}, as opposed to the {\em message itself}. 

This is done by selectively and carefully utilizing {\em sparsity} in error vectors. In the early rounds of communications, where the error vector is not sparse, the CE-HARQ performs the conventional HARQ. In the later rounds of communications, as the error vector becomes more sparse, it iteratively updates and compresses the error vector after each round, reducing the FEC rate further.

The CE-HARQ system has three major operational blocks, as depicted in Fig~\ref{fig:sysmodel}. 
The transmitter first determines whether to send the message or the error based on the sparsity of the error, shown in \encircle{A}. The encoder then performs two layers of channel encoding, MAC and PHY, shown in \encircle{B} and transmits through the forward channel. At the receiver, the decoder estimates the message, shown in \encircle{C}, and transmits this estimate through the high-quality feedback link.

\begin{figure}[!htb]
    \centering
 	\includegraphics[width=1.0\linewidth]{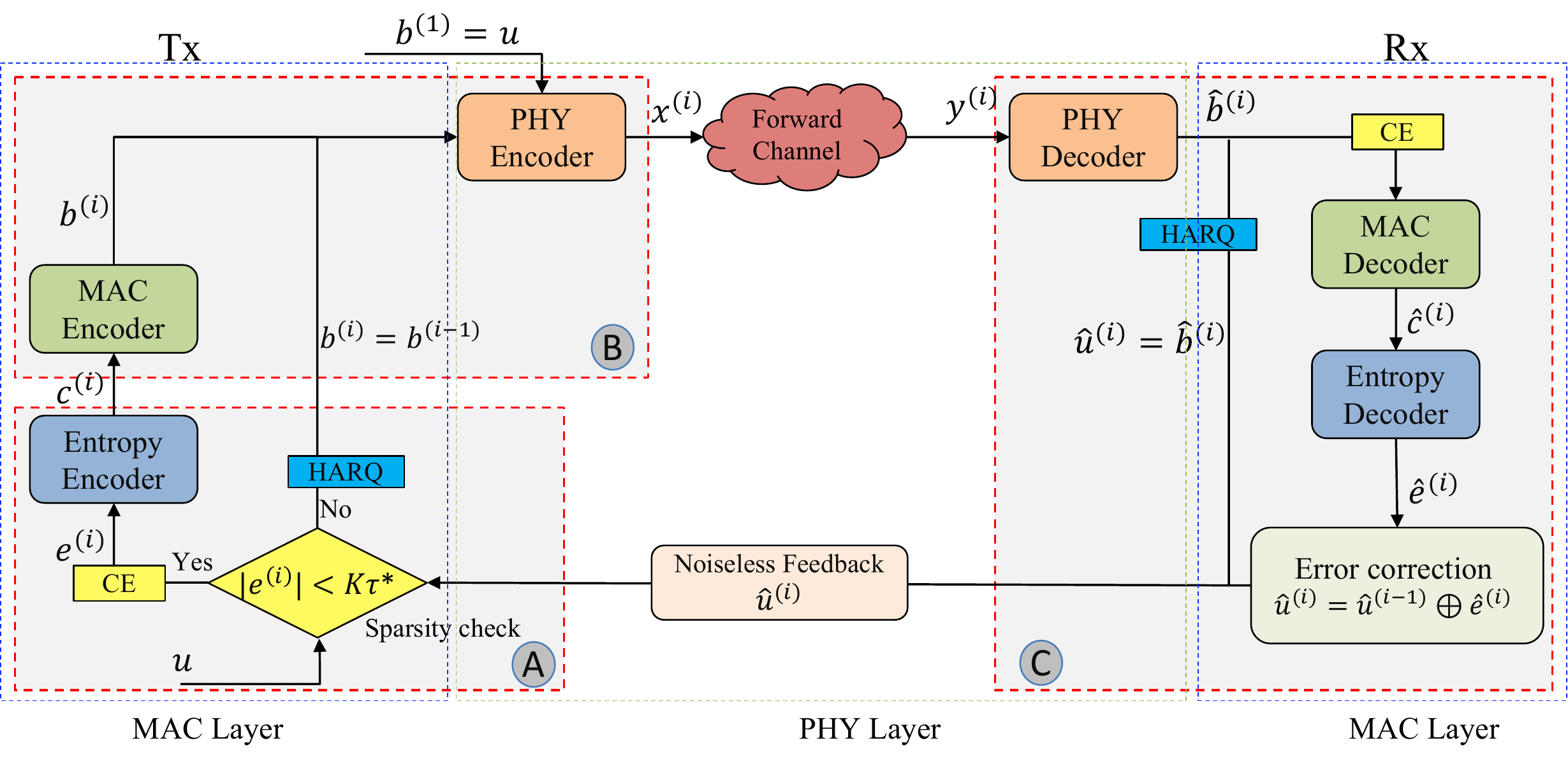}
 	\captionsetup{font=small}
 	\caption{CE-HARQ System Model}
 	\label{fig:sysmodel}
\end{figure}

\begin{table*}
\centering
\begin{tabular}{l cccc cc}
\toprule
\multirow{2}{*}{Method} &
      Feedback &   Entropy & MAC  & PHY  & Forward  & Feedback\\ 
 &     Code  &   Code  &   Code   & Code & Message  & Message    \\ 

   \midrule
    ARQ                         &    ACK/NACK   & N/A     & N/A  &   Any  & Binary & Binary                 \\
    HARQ                        &   ACK/NACK    & N/A     & HARQ &   Any     & Binary & Binary              \\
    DeepCode \cite{kim2018deepcode}                   &  Neural       & Neural  & Neural & Neural & Continuous & Continuous \\
    
    GBAF \cite{ozfatura2022all} &  Neural-Block & Neural  & Neural & Neural & Continuous & Continuous \\
    
    CEC\cite{ooi1998fast}  &  Received vector $\hat{\mathbf{x}}^{(i)}$   & Shannon-Fano  &  N/A & N/A & Binary & Binary \\
    
    AIC\cite{perotti2021accumulative}  &  Quantized LLR  & Huffman  &  N/A & N/A & Binary & Binary \\
    
    Ours (CE-HARQ)    &  Message estimate $\hat{\mathbf{u}}^{(i)}$ & AC & Configurable+HARQ & Any    & Binary & Binary \\

    \bottomrule

\end{tabular}
\caption{Comparison of various feedback schemes}
\label{tab:comparison}
\end{table*}

\subsection{HARQ vs. CE Selection and Error Compression}
The first round is always HARQ and the PHY encoder takes the message $\mathbf{u} \in \{0,1\}^K$ as input \textit{i.e.,} $\mathbf{b}^{(1)}=\mathbf{u}$. Subsequently, in the $(i+1)^{th}$ round, the transmitter first computes the error in the receiver's estimated message using the feedback from round $i$, $\hat{\mathbf{u}}^{(i)}$, as $\mathbf{e}^{(i)} = \mathbf{u} \oplus \hat{\mathbf{u}}^{(i)} \in \{0,1\}^K$, which is used to select the retransmission scheme.


\emph{Sparsity Check: HARQ or Compressed Error (CE). } 
If the number of errors in $\mathbf{e}^{(i)}$ 
is large, 
\textit{i.e.,} ${|e^{(i)}|} > {K}\tau^\star$, where $\tau^\star$ is the pre-defined sparsity threshold (discussed in Section~\ref{sec:selection}), 
we retransmit the message : $\mathbf{b}^{(i)} = \mathbf{u}$. 
On the other hand, once the error vector is highly sparse, \textit{i.e.} ${|e^{(i)}|} < {K}\tau^\star$, the scheme switches from  HARQ retransmission to compressed error retransmission \textit{i.e., CE}, detailed below. \vspace{.1em}

\emph{Entropy Encoder.} The error vector $\mathbf{e}^{(i)}$ is first losslessly compressed to $ H(\mathbf{e}^{(i)})$ bits , $\mathbf{c}^{(i)}$, 
using Arithmetic Coding (AC) \cite{cover1999elements}, which allows the use of a very low rate FEC code to protect $\mathbf{c}^{(i)}$ as shown in the next section. 
\begin{align*}
     \phi_{\text{AC}} : \mathbf{e}^{(i)} \in \{0,1\}^K \rightarrow \mathbf{c}^{(i)} \in \{0,1\}^{H(\mathbf{e}^{(i)})}.
\end{align*}

\subsection{Channel Coding}
Since the sparsity of the error vector varies for each round of CE, the achievable rate for the PHY layer FEC scheme changes with each round. In low power constrained systems such as Bluetooth Low Energy/Bluetooth Classic, strong latency requirements makes it challenging to modify the PHY layer. To address this, we use two layers of channel coding:  \vspace{.1em}

{\em MAC encoder.} We use a variable rate MAC encoder to first encode $\mathbf{c}^{(i)} \in \{0,1\}^{H(\mathbf{e}^{(i)})}$ to $K$ bits, resulting in constant rate for PHY encoder.
\begin{align*}
    \phi^{(i)}_{\text{MAC}} : \mathbf{c}^{(i)} \in \{0,1\}^{H(\mathbf{e}^{(i)})} \rightarrow \mathbf{b}^{(i)} \in \{0,1\}^K.
\end{align*}

{\em PHY encoder.} The PHY encoder is a fixed rate encoder that maps a given input $K$ bits to $N$ bits using a rate $R=\sfrac{K}{N}$ code. 
 \begin{align}
    \phi^{(i)}_{\text{PHY}} : \mathbf{b}^{(i)} \in \{0,1\}^K \rightarrow \mathbf{x}^{(i)} \in \{0,1\}^N.
\end{align}
 This two-phase MAC and PHY encoding results in an effective coding rate of $\frac{H(\mathbf{e}^{(i)})}{K} \times \frac{K}{N} = \frac{H(\mathbf{e}^{(i)})}{N}$, while keeping the PHY layer unchanged. Notably, during the HARQ phase the MAC encoding is not performed, allowing a rate of $\sfrac{K}{N}$.\vspace{.2em}

{\em Remark:} Note that we perform a separate source-channel coding for the error vector. In Appendix~\ref{sec:src-ch separation}, we show that such separation is asymptotically optimal for each round of forward communication with the receiver's message estimate available at both the transmitter and receiver.

\subsection{Decoding}
The decoding operation is performed in two stages. First, the PHY decoder estimates the $K$ bits from PHY encoder $\hat{\mathbf{b}}^{(i)}$. During HARQ scheme, this is sufficient to recover the message as $\hat{\mathbf{u}}^{(i)}$. On the other hand, for CE scheme, the MAC decoder performs another layer of decoding to estimate the compressed error $\hat{\mathbf{c}}^{(i)}$. Finally, using $\hat{\mathbf{c}}^{(i)}$, the entropy decoder reconstructs the full error vector $\hat{\mathbf{e}}^{(i)}$, which is used to get the new message estimate as $\hat{\mathbf{u}}^{(i)} = \hat{\mathbf{u}}^{(i-1)} \oplus \hat{\mathbf{e}}^{(i)}$.

\subsection{Adaptive and Iterative compression}\label{sec:selection}
At every retransmission, the transmitter chooses the retransmission scheme based on the sparsity of error vector $\mathbf{e}^{(i)}$. It is beneficial to choose the CE scheme only when the error is \textit{sufficiently sparse} so that the resulting coding gain from the low-rate FEC is higher than the loss of SNR gain by not choosing HARQ. There exists an optimal sparsity threshold $\tau^\star$, which varies dynamically with the effective SNR.

To determine $\tau^\star$ for a given effective SNR and FEC rate, we first derive an analytical solution by assuming an asymptotic setting (Appendix~\ref{sec:spa_analytical}). However, the optimality of both source and channel coding do not hold well at short block lengths. To account for this, we find a numerical solution for the optimal threshold for a given effective SNR and rate using Monte-Carlo simulations (Appendix~\ref{sec:grid_search}). Based on the $\tau^\star$ computed, the retransmission scheme is selected for each round for different SNRs. We refer to this as \textit{adaptive compression}. This is a key distinction between CE-HARQ and other feedback schemes, such as AIC, which continuously compress after each round.

During the CE phase of CE-HARQ, the error in the message estimate should decrease over the rounds, making the error vector sparser and allowing for further compression. This concept is the basis of \textit{iterative compression} in CEC~\cite{ooi1998fast} and AIC~\cite{perotti2021accumulative}. However, if an error occurs in MAC decoding,
the error may increase \textit{i.e.,} $|\mathbf{e}^{(i+1)}| > |\mathbf{e}^{(i)}|$. In such cases, CE-HARQ triggers a fallback mechanism and retransmits the previous error vector to improve the error estimate at the receiver. This guarantees that the iterative compression is performed only when the new error vector is sparser than the previous error vector, which is crucial for practical systems.

\section{Related work and comparison}\label{sec:related}

In this section, we compare CE-HARQ to existing communication systems with feedback, as summarized in Table \ref{tab:comparison}.
The most commonly used ARQ and HARQ schemes are not optimized for noise-asymmetric channels. Upon the availability of noiseless feedback, Perotti et al. introduced an alternate approach, Accumulative Iterative Code (AIC)~\cite{perotti2021accumulative}, building on the pioneering work of Ooi and Wornell - Compressed Error Cancellation (CEC)~\cite{ooi1998fast}. These schemes use feedback to communicate compressed errors without channel coding during retransmission. Notably, these schemes achieve improved spectral efficiency at the cost of the number of retransmissions, rendering them unsuitable for many latency-critical applications. In contrast, CE-HARQ augments the compressed error framework with channel coding, to maximize the probability of decoding in each round. This is achieved by utilizing the maximum available channel uses in each round, thus trading off spectral efficiency for the number of retransmissions. Additionally, AIC~\cite{perotti2021accumulative} uses Huffman coding to compress the error, which requires a large lookup table to be stored for each round of communication adding significant overhead. Instead, CE-HARQ uses Arithmetic Coding (AC) or even index-coding \cite{jiang2022indexing} when the number of errors is very small, suffering minimal overhead.


    
    



More recently, several works focused on deep learning-based feedback codes, at both symbol and block level, \cite{kim2018deepcode, safavi2021deep, mashhadi2021drf, shao2022attentioncode, ozfatura2022all} which leverage feedback to significantly improve reliability at practical block lengths. However, these systems are challenging to deploy practically since they incur significant costs in terms of memory, and complexity, and cannot be deployed in resource-constrained applications that require low latency. Further, they rely on non-binary continuous feedback rendering them unsuitable for implementation in the modern communication stack. Meanwhile, CE-HARQ uses binary feedback in conjunction with a low-complexity channel coding and error compression to ensure compatibility with current systems and low-resource settings. Recent commercial feedback communication systems, such as those from Aira Technology~\cite{jiang2022indexing,chandrasekher2022multi} are good examples of practical error compression feedback systems, that have been shown to achieve noticeable gains in reliability and range of Bluetooth-based IoT systems.




\section{Results}
In this study, we focus on resource-constrained applications on noise-asymmetric channels. Often in such applications, it is infeasible to modify the PHY layer design; CE-HARQ can support this by limiting changes to the MAC layer.

Specifically, our main result is for low-power Bluetooth communication, where the transmitter is highly power-constrained \cite{woolley2021bluetooth}. We consider a convolutional code of rate \sfrac{1}{2} for the PHY layer and a variable rate convolutional code with a rate as low as \sfrac{1}{12} for the MAC layer. The standard hard Viterbi decoder is used for both PHY and MAC layers.

For completeness, we consider two additional settings. The first is the case of uncoded transmission for the PHY layer and variable rate convolutional codes for the MAC layer, with hard Viterbi decoding. Secondly, we consider a highly reliable FEC used in 5G NR: the standard rate \sfrac{3}{4} LDPC codes~\cite{3gppts38212} for the PHY layer and variable rate LDPC codes for the MAC layer, with the standard min-sum decoder using a maximum of 6 iterations.\vspace{.2em}

\textbf{Baselines.} We consider two baselines: First, we compare CE-HARQ with the conventional HARQ scheme. Additionally, to clearly quantify the gain obtained by combining FEC with compressed error retransmissions, we establish the following baseline scheme, by modifying CE-HARQ: In the MAC layer, the FEC is removed and the error vector is compressed after each retransmission. The PHY layer then encodes the compressed error vector at a constant coding rate. Owing to it's similarity with AIC, we refer to the scheme as AIC with Arithmetic Coding (AIC-AC). 



We evaluate the performance of CE-HARQ against the baselines on two practical scenarios: (1) Low-latency: Minimize BLER for a fixed number of retransmissions, and (2) High-reliability: Minimize the average number of rounds to achieve a target BLER. Furthermore, we evaluate and compare the spectral efficiency of our algorithm against the baselines. \vspace{.2em}

\begin{figure*}[!htbp]

\centerline{
\subfigure[]
{
    \centering
 	\includegraphics[width=0.35\linewidth]{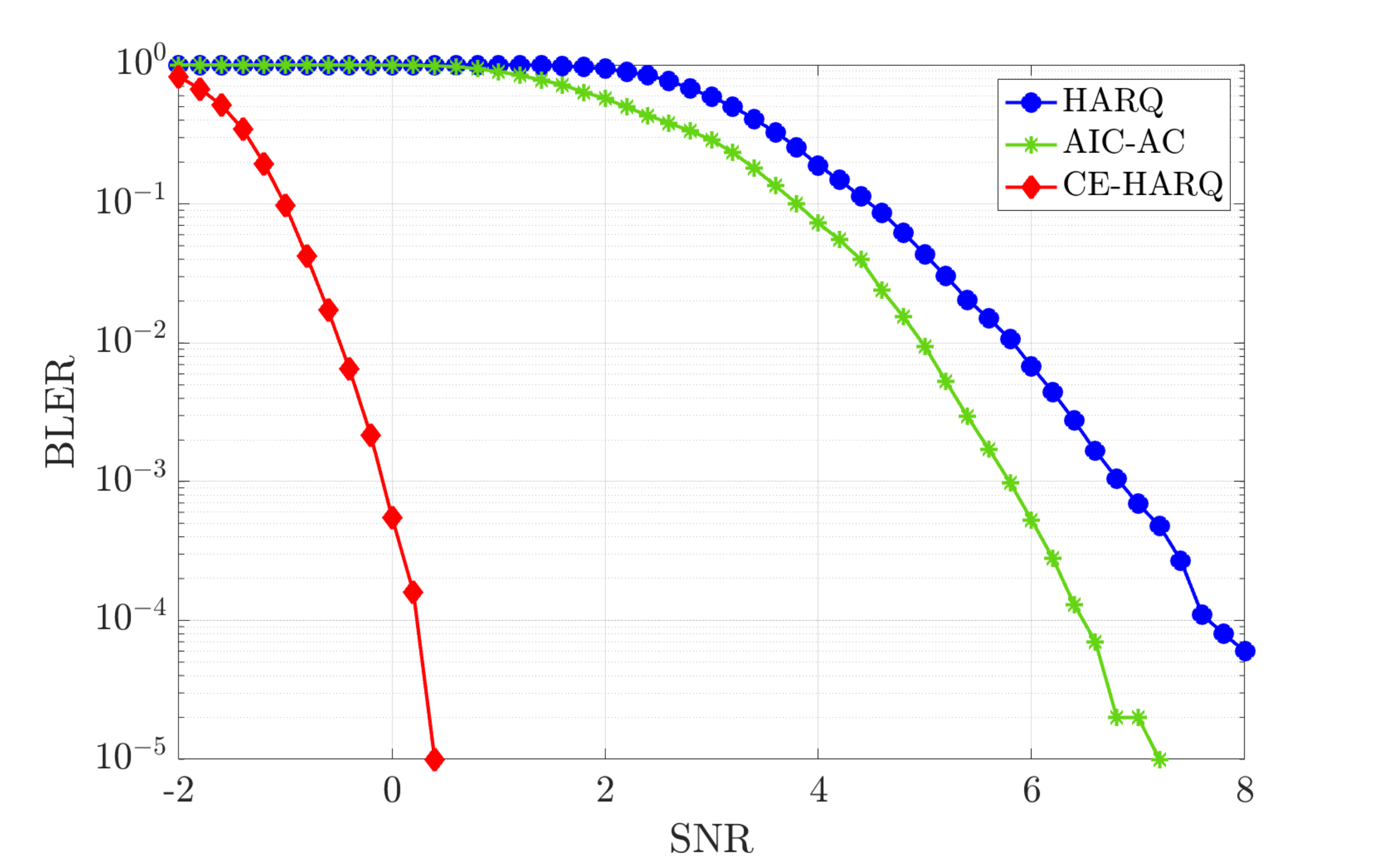}
 	\label{fig:bler_comp_uncoded}
}
\hspace{-0.25in}
\subfigure[]
{
    \centering
    \includegraphics[width=0.35\linewidth]{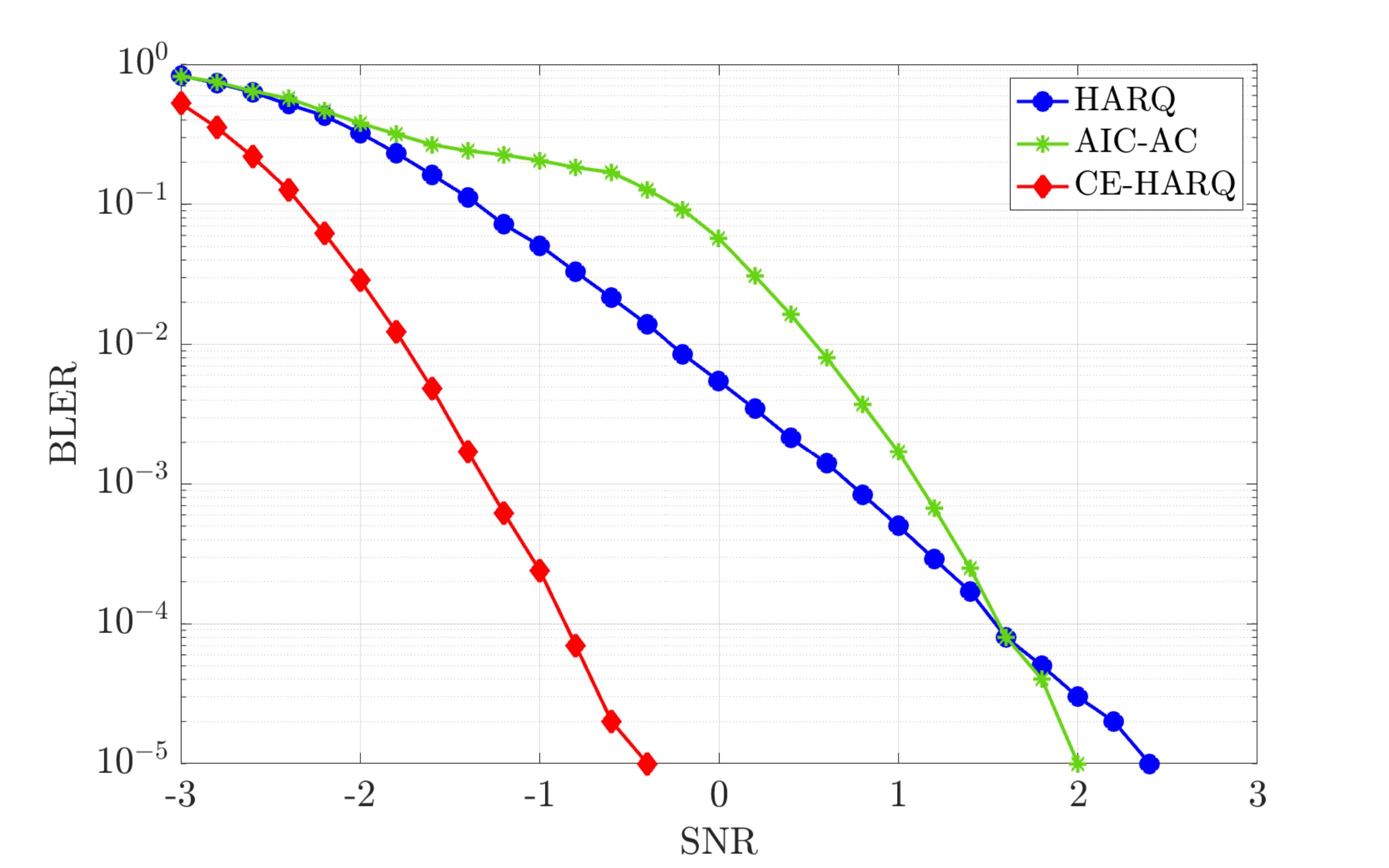}
 	\label{fig:bler_comp_conv}
  }
\hspace{-0.25in}
\subfigure[]
  {
    \centering
    \includegraphics[width=0.35\linewidth]{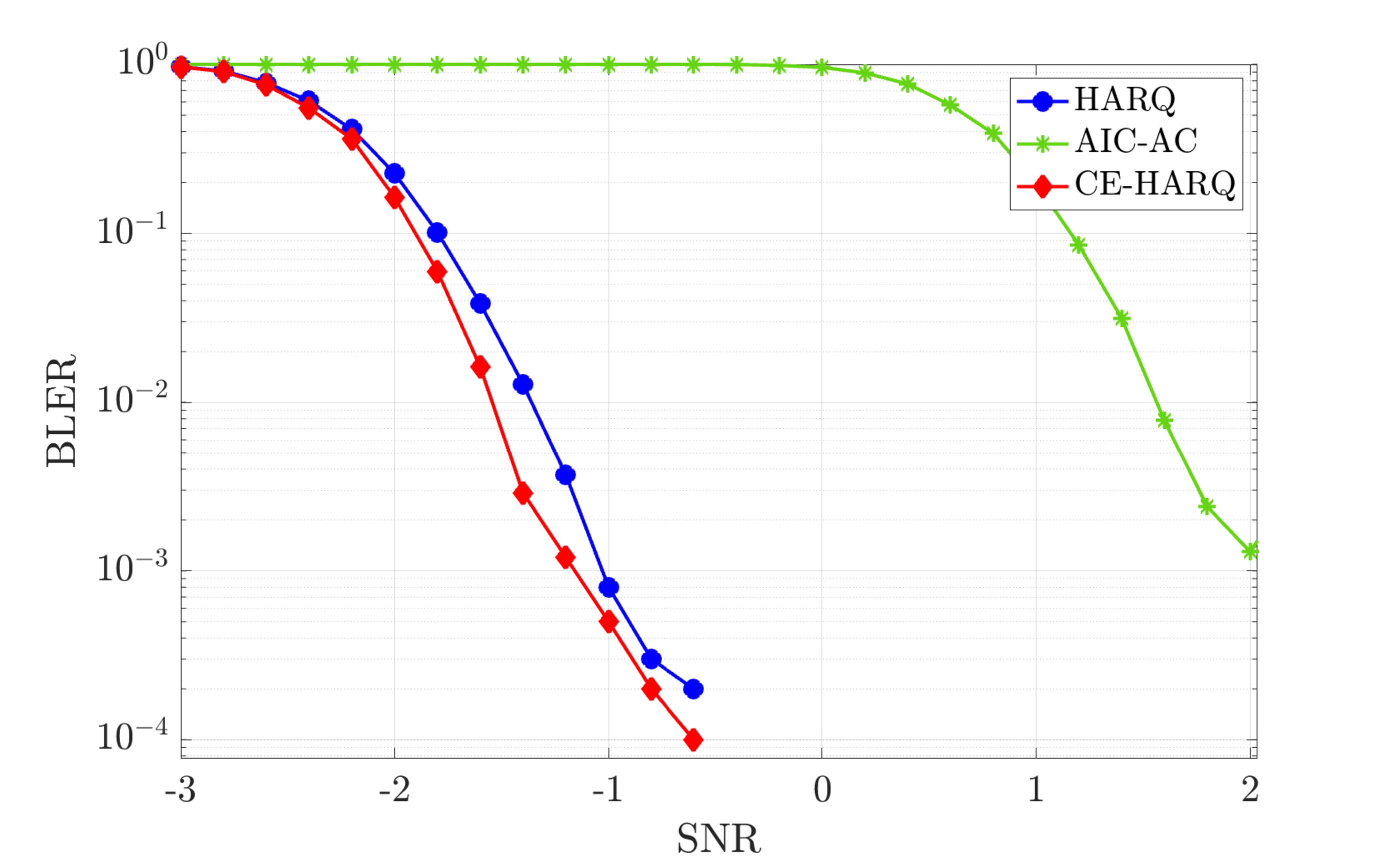}
 	\label{fig:bler_comp_ldpc}
  }

  }
  \caption{BLER performance after 3 retransmissions, when the PHY coding scheme is: a) Uncoded, message length $K=800$. b) Rate \sfrac{1}{2} Convolutional code, $K=800$. c) Rate \sfrac{3}{4} LDPC code, $K=960$. CE-HARQ exhibits large gains over the conventional HARQ and AIC with arithmetic coding, which always transmits the compressed error, with low-complexity PHY coding schemes such as convolutional codes, typically encountered in resource-constrained devices.}

\end{figure*}

\textbf{Performance for a given latency}. In low-latency applications, the maximum number of retransmissions is often limited. Here, we consider a maximum of 3 retransmissions and compare the BLER performance. As demonstrated in~\figref{fig:bler_comp_uncoded} for uncoded PHY, CE-HARQ is better than HARQ and AIC-AC by up to $7.5$ dB and $6.3$ dB respectively at a BLER of $10^{-4}$. This translates to a significant improvement in the reliability and range of the system while maintaining the same latency. Similarly, ~\figref{fig:bler_comp_conv} shows that when PHY uses a convolutional code of rate \sfrac{1}{2}, CE-HARQ performs better than HARQ and AIC-AC by up to $2.5$ dB. Finally, when a \sfrac{3}{4} LDPC code is used for the PHY layer (~\figref{fig:bler_comp_ldpc}), CE-HARQ obtains a marginal gain of $0.1$ dB. In summary, CE-HARQ provides the maximum gains for systems with low-complexity coding schemes, which are inevitable in resource-constrained devices such as Bluetooth-LE. Additionally, it is clearly seen that in practical retransmission systems, the uncoded transmission of compressed errors is highly undesirable. \vspace{.2em}


\begin{figure*}[htbp]

\centerline{
\subfigure[]
{
    \centering
 	\includegraphics[width=0.35\linewidth]{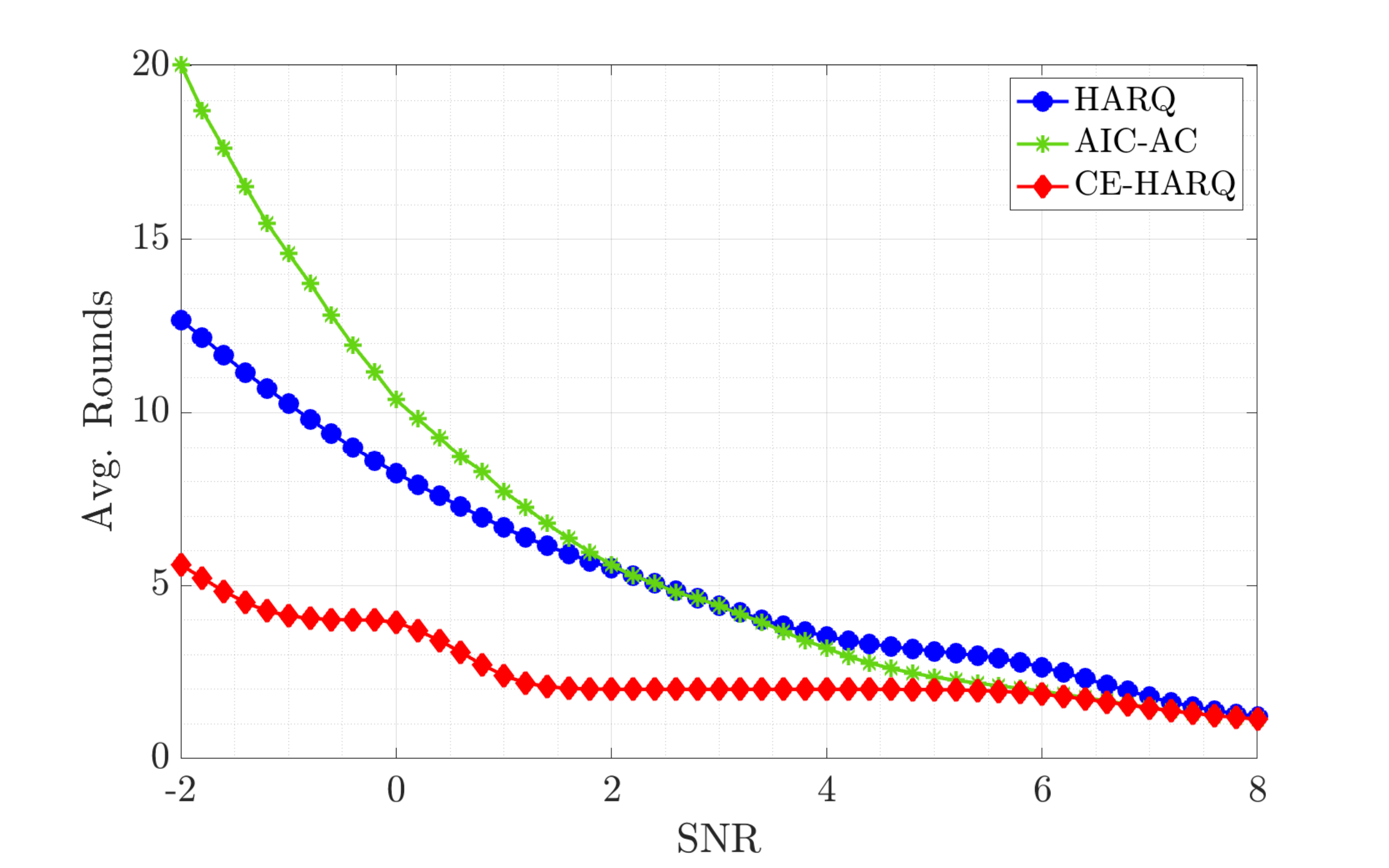}
 	\label{fig:latency_comp_uncoded}
}
\hspace{-0.25in}
\subfigure[]
{
    \centering
    \includegraphics[width=0.35\linewidth]{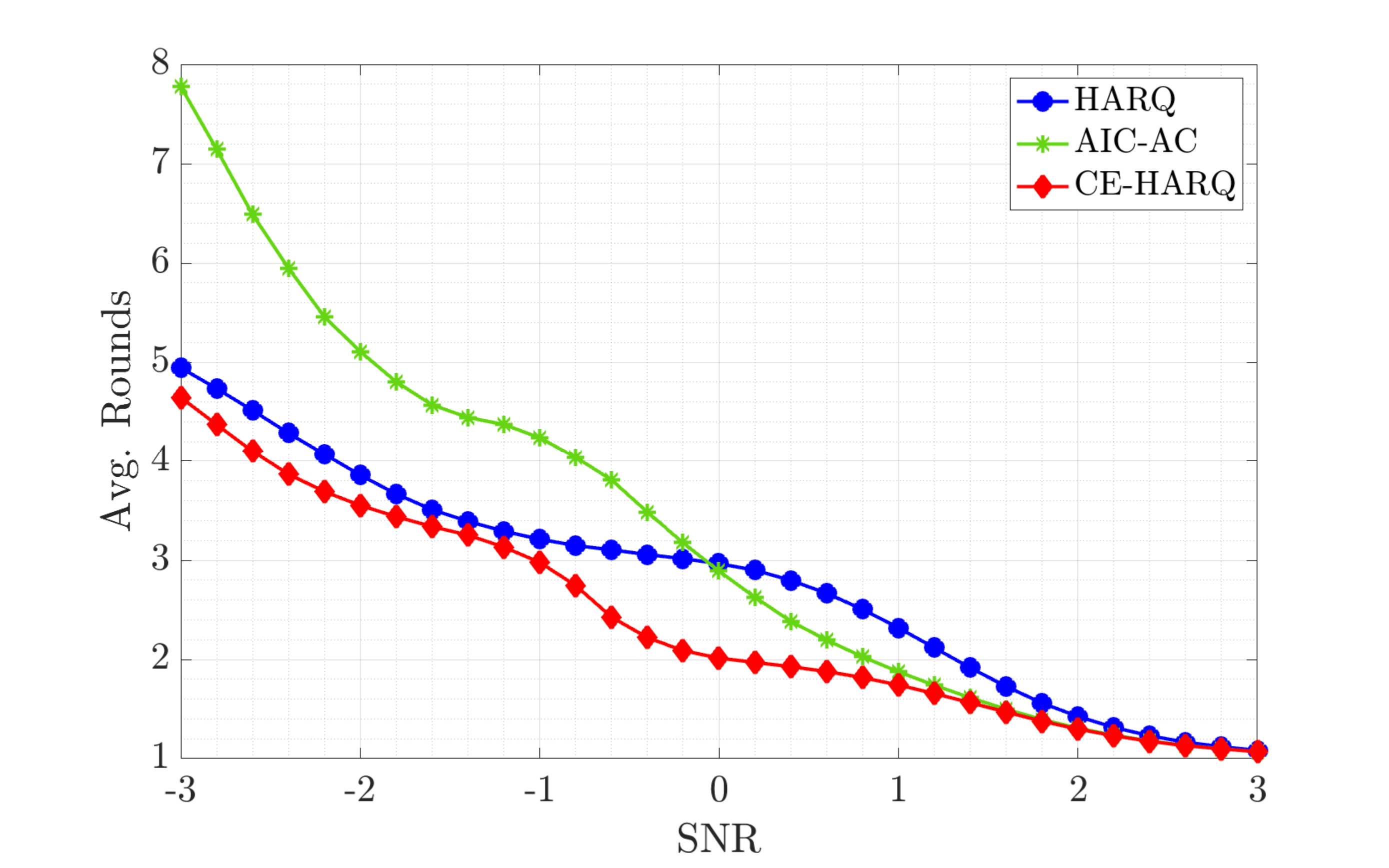}
 	\label{fig:latency_comp_conv}
  }
\hspace{-0.25in}
\subfigure[]
  {
    \centering
    \includegraphics[width=0.35\linewidth]{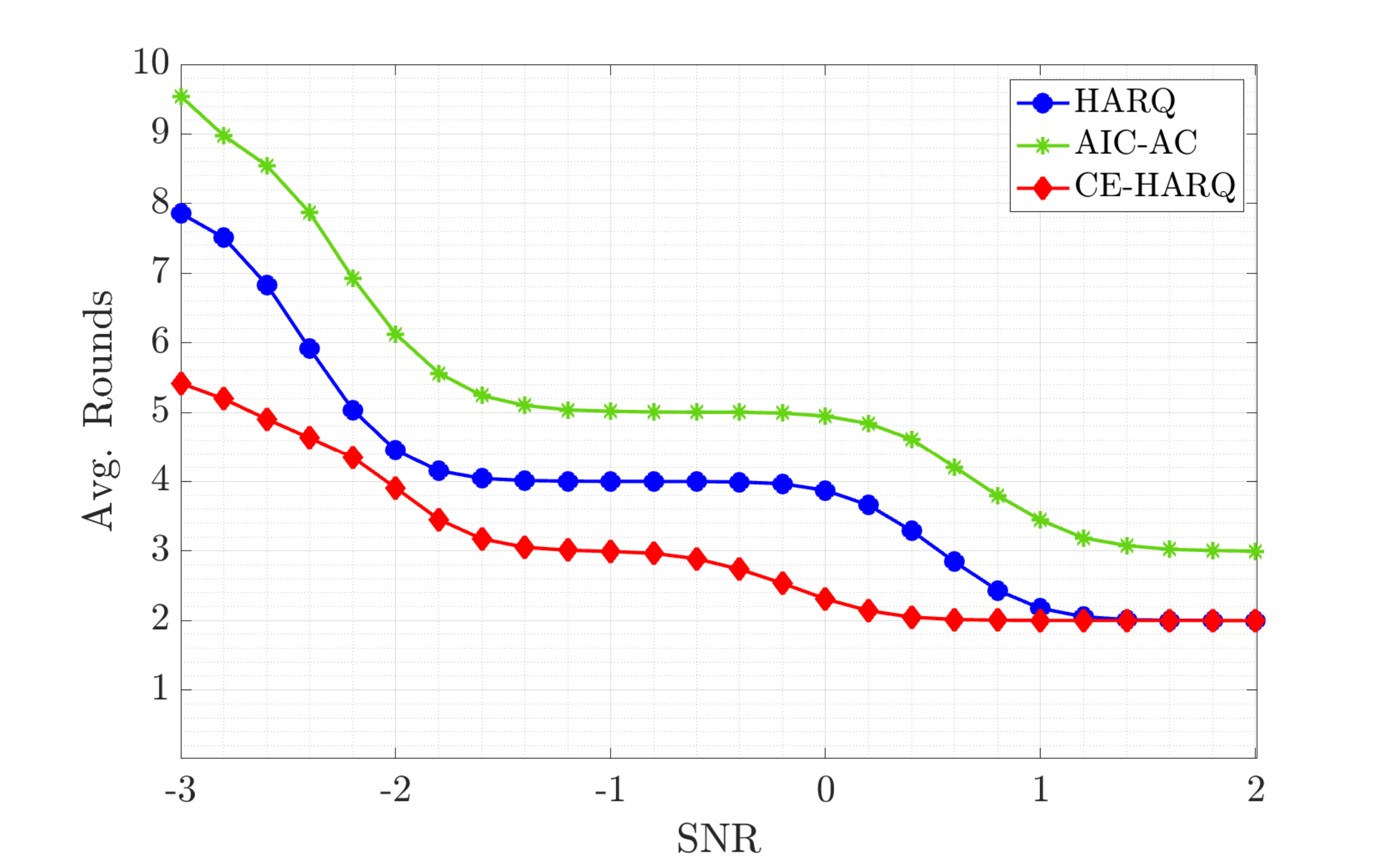}
 	\label{fig:latency_comp_ldpc}
  }

  }
  \caption{Average number of rounds for successful transmission, when the PHY coding scheme is: a) Uncoded, message length $K=800$. b) Rate \sfrac{1}{2} Convolutional code, $K=800$. c) Rate \sfrac{3}{4} LDPC code, $K=960$. CE-HARQ obtains maximum gain over baselines when the PHY layer is uncoded and the gains decrease as the complexity of FEC increases.}
\end{figure*}

\textbf{Latency for a given performance}. In high-reliability applications, it is critical to guarantee a low BLER. Here, for the above three settings, we now consider a target BLER of $10^{-5}$ and compare the average rounds required to achieve this. As highlighted in~\figref{fig:latency_comp_uncoded} for uncoded PHY, CE-HARQ has up to $50\%$ and $62\%$ lower latency compared to HARQ and AIC-AC respectively at SNR $0$ dB. This translates to significant improvements in power consumption while maintaining the same reliability. In~\figref{fig:latency_comp_conv}, \sfrac{1}{2} convolutional coding is used for the PHY layer resulting in latency improvements of up to $33\%$ for CE-HARQ compared to HARQ and AIC-AC at SNR $0$ dB. Finally, in~\figref{fig:latency_comp_ldpc}, \sfrac{3}{4} LDPC coding is used for the PHY layer, resulting in up to $42\%$ and $54\%$ lower latency compared to HARQ and AIC-AC respectively at SNR $0$ dB. \vspace{.2em}

\begin{figure*}[htbp]

\centerline{
\subfigure[]
{
    \centering
 	\includegraphics[width=0.35\linewidth]{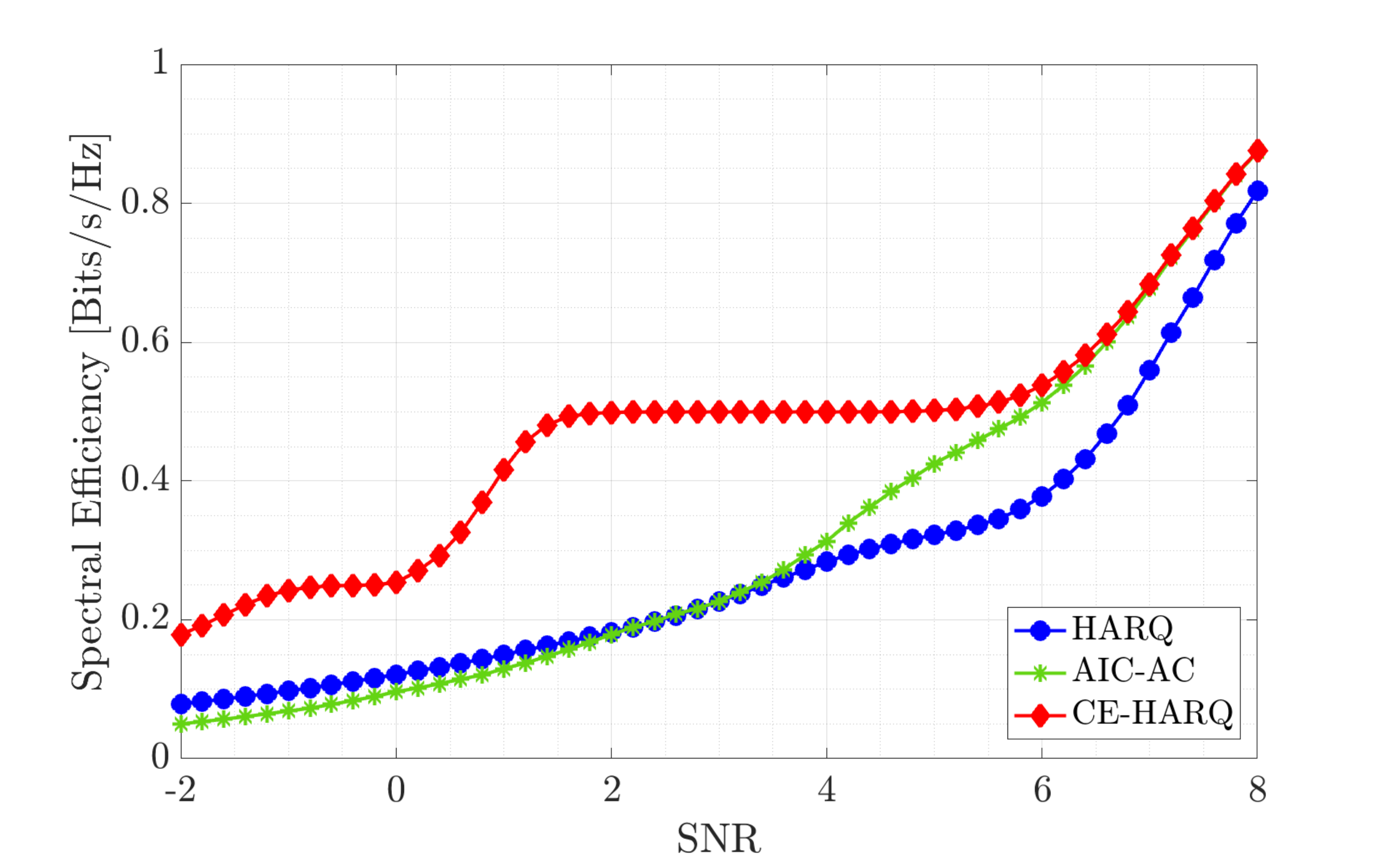}
 	\label{fig:se_comp_uncoded}
}
\hspace{-0.25in}
\subfigure[]
{
    \centering
    \includegraphics[width=0.35\linewidth]{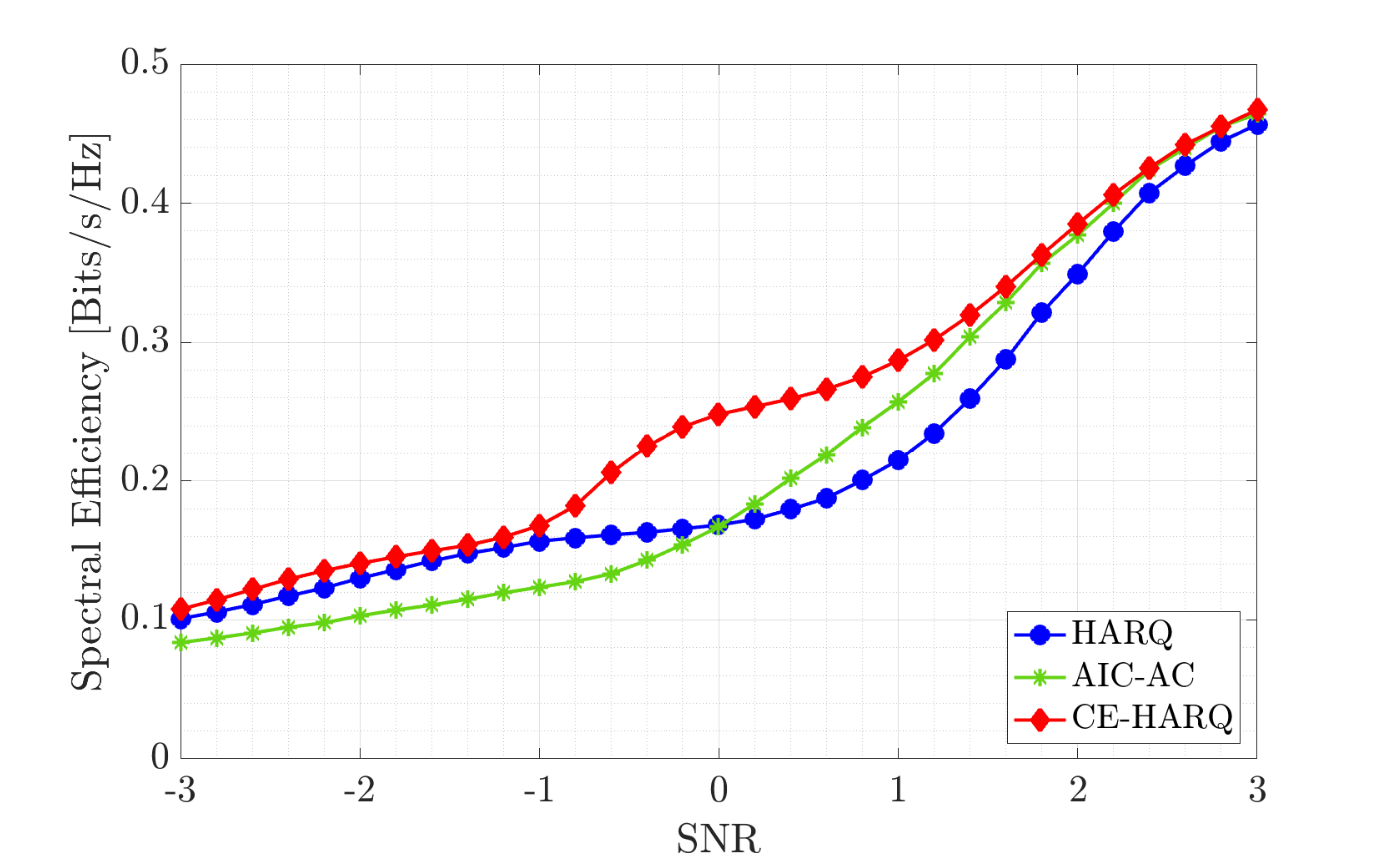}
 	\label{fig:se_comp_conv}
  }
\hspace{-0.25in}
\subfigure[]
  {
    \centering
    \includegraphics[width=0.35\linewidth]{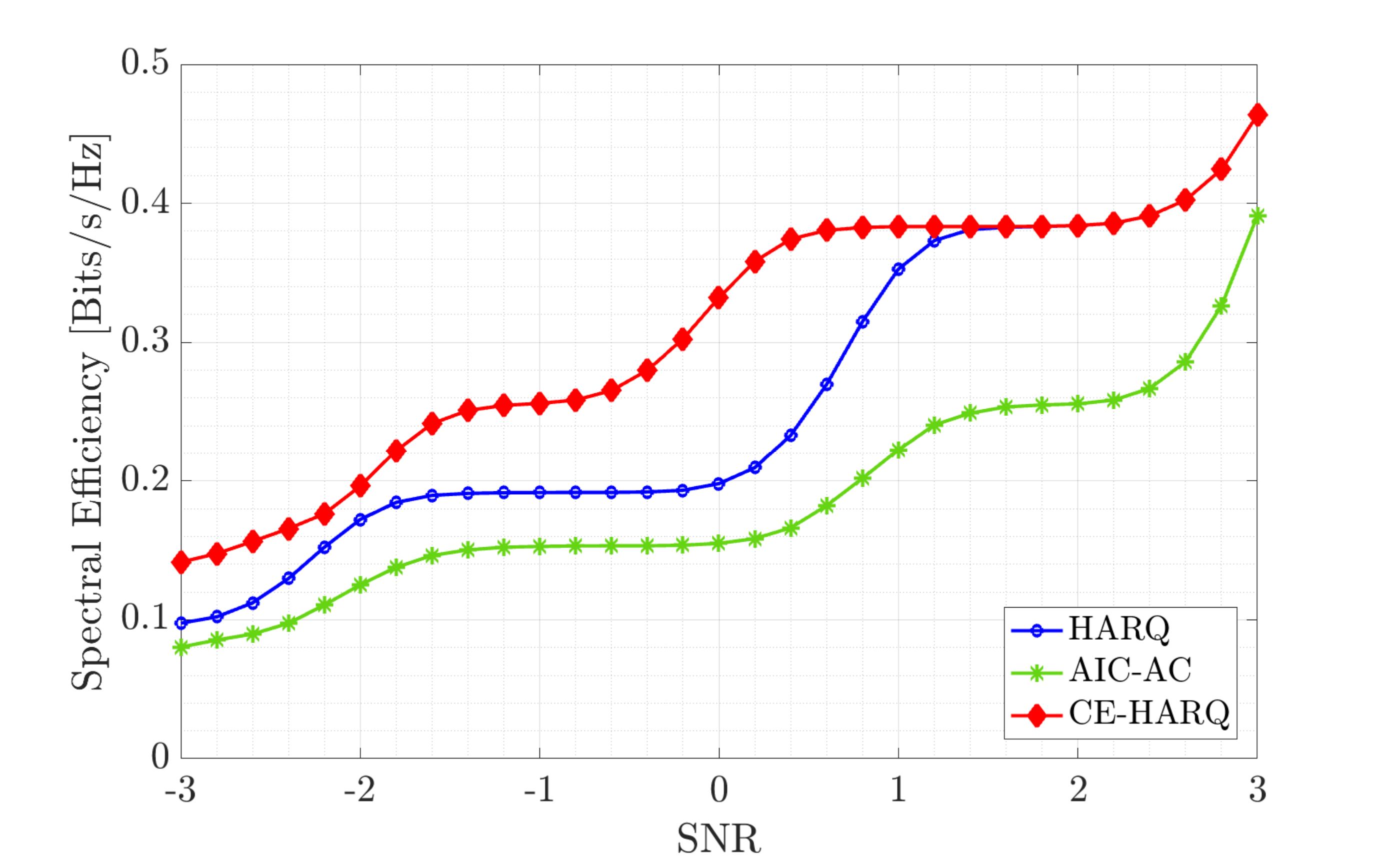}
 	\label{fig:se_comp_ldpc}
  }

  }
  \caption{Spectral Efficiency for a target BLER of $10^{-5}$, when the PHY coding scheme is : a) Uncoded, message length $K=800$. b) Rate \sfrac{1}{2} Convolutional code, $K=800$. c) Rate \sfrac{3}{4} LDPC code, $K=960$. CE-HARQ consistently achieves better spectral efficiency compared to the two baselines.}\label{fig:se_comp}

\end{figure*}

\textbf{Spectral efficiency. } The spectral efficiency of the forward channel for a given SNR and BLER is defined as 
\begin{align}
   \text{SE} \stackrel{\text{def}}{=} \frac{KQ}{E[N_c]}(1 - \text{BLER}) \text{bits/s/Hz},
\end{align}
where $K$ is the message length and $E[N_c]$ is the expected total channel uses for the retransmission scheme. $Q$ is the order of modulation, which is $1$ for BPSK. 

CE-HARQ is a constant-length code; hence $\E[N_c] = N\E[D_c]$, where $N$ is the codeword length and $\E[D_c]$ is the average number of rounds required to achieve the target BLER. 

SE combines the error correction and latency to provide a unified view of the trade-off between both metrics. While AIC-AC is optimized for SE by utilizing the minimal number of channel uses in every round, it is not straight forward how the trade-off between lesser number of rounds with $N$ channel uses and higher number of rounds with $H(\mathbf{e}^{(i)})$ channel uses plays out. ~\figref{fig:se_comp} shows that CE-HARQ consistently achieves better SE than HARQ and AIC-AC.

\vspace{.2em}
Through extensive simulations across different settings, we show that CE-HARQ significantly improves upon the baselines in resource-constrained setups such as Bluetooth, where using low-complexity coding schemes is unavoidable.  
However, the gains over HARQ are minimal when strong coding schemes like LDPC are used. This can be attributed to the narrow waterfall region in such codes, which results in the coding gain of the CE phase being inferior to the gains obtained by increased effective SNR through HARQ.

\section{Conclusion and Remarks}
In this work, we provide a practical framework, Compressed Error HARQ, to build block-feedback codes for noise-asymmetric channels. We assume a noise-asymmetric channel and focus on resource-constrained settings where the transmitter is a low-power, low-memory device, and the receiver can operate at much higher power. CE-HARQ generalizes the Hybrid-ARQ scheme for noise-asymmetric channels by combining it with compressed error, forward error correction, and a novel retransmission selection mechanism that intelligently switches between HARQ and Compressed Error (CE) based on the sparsity of the error vector. 

CE-HARQ achieves significant improvements in both error correction performance and latency for a variety of configurations. Additionally, CE-HARQ achieves better Spectral Efficiency compared to purely CE based schemes such as AIC-AC. Significant gains are observed in the case of convolutional coding for the PHY layer, which is a common configuration for low-power applications such as Bluetooth and IoT systems. We also outline the design choices essential to facilitate practical deployment such as limiting the feedback to be binary and adding a fallback mechanism to ensure continual improvement of the message estimate at the receiver. 

There are several interesting directions for extending CE-HARQ. The first is the case of partial feedback where the full message estimate is not available, and only a few bits of information can be sent. Deciding the optimal information for feedback is a non-trivial problem. Another interesting problem is the case of noisy feedback and non-AWGN channels, where the expected gains will depend heavily on the quality of the feedback link. Characterizing such a trade-off would be a very interesting future research direction. 




 \vspace{-.8em} 
 \medskip
 \small
 \bibliographystyle{IEEEtran}
 \bibliography{bibilography}

\clearpage
\normalsize
\begin{appendices}
\section{Selection of Sparsity threshold}\label{sec:spa_selection}
At the beginning of round $i$, the error vector should be compressed only if it is sufficiently sparse. To compute the optimal sparsity threshold, two approaches are proposed. The first is an analytical approach that compares the expected probability of error for HARQ vs CE-HARQ, for a given round and SNR. Next, a simple analytical approach based on Monte Carlo simulations is discussed.  

\subsection{Analytically computing the optimal sparsity threshold}\label{sec:spa_analytical} The performance of convolutional codes is well studied in literature \cite{viterbi1971convolutional, malkamaki1999evaluating} and we assume that the expected bit error rate can be computed for a given SNR $S$ and rate $R$ as $P_e(S,R)$. For many short length block-codes, this can be easily computed using Monte Carlo simulations. 

For ARQ, since every retransmission has equal probability of error, compressing the error vector and using low-rate FEC always improves the decoding probability. But in HARQ type retransmission, there is a trade-off between the coding gain achieved by error compression and FEC vs the SNR gain of HARQ. Assuming chase combining at the receiver, the effective SNR at the receiver at round $i$ can be computed as 
\begin{align}
    \text{S}_i = S + 10\log_{10} i,
\end{align}
where SNR is the SNR in first round. With each retransmission, chase combining results in an effective SNR gain via the principle of Maximal Ratio Combining \cite{frenger2001performance}. 

During round $i$, HARQ and CE-HARQ can achieve a maximum possible effective SNR of 
\begin{align}
    S_H &= S + 10\log_{10} D, \\
    S_i &= S + 10\log_{10} (D-i+1)
\end{align}
 respectively, by the end of maximum number of rounds $D$. Here, $S$ is the SNR in round 1. For a given error vector $e^{(i)}$ in round $i$, CE-HARQ can achieve the lowest FEC rate of $R_i = \sfrac{H(e^{(i)})}{N}$. HARQ uses a FEC rate $R = \sfrac{K}{N}$ for all the rounds. 

 Assuming independent and identically distributed (i.i.d) errors in the message vector, the entropy of the error vector $e^{(i)}$ can be written as
\begin{align}
    H(e^{(i)}) = -K(\tau_i \log_{2} \tau_i + (1-\tau_i)\log_{2} (1-\tau_i)),
\end{align}
where $\tau_i = \frac{|e^{(i)}|}{K}$ is sparsity of the error vector. 

Thus, for round $i$, we define optimal threshold $\tau_i^\star$ as the maximum value of sparsity $\tau_i$ such that $P_e(S_i, R_i) < P_e(S_H,R)$. 

\subsubsection{Grid search for threshold}\label{sec:grid_search}
Since it is not feasible to get a closed form solution for $\tau_i^\star$, we instead explore a simple grid search based approach. To keep the search computationally feasible, we fix single optimal threshold $\tau^\star$ for all the rounds of communication. For each SNR and coding rate, probability of decoding error is found via Monte Carlo simulations for different values of sparsity threshold from $0$ to $0.1$ in steps of $0.005$. Based on the results, the best sparsity threshold is chosen for each SNR.

\section{Ablation studies}
To highlight the significance of selecting the optimal sparsity threshold $\tau^\star$, we compare the performance of CE-HARQ with threshold $\tau^\star$ denoted by CE-HARQ($\tau^\star$) to a naive approach where the error is always compressed \textit{i.e.,} $\tau = 1$ followed by FEC, denoted by CE-HARQ($\tau$). \figref{fig:latency_comp_abl} demonstrates that the optimal selection of the sparsity threshold $\tau^\star$ crucially impacts the performance on both our evaluation settings. 

\begin{figure*}[htbp]

\centerline{
\subfigure[]
{
    \centering
 	\includegraphics[width=0.5\linewidth]{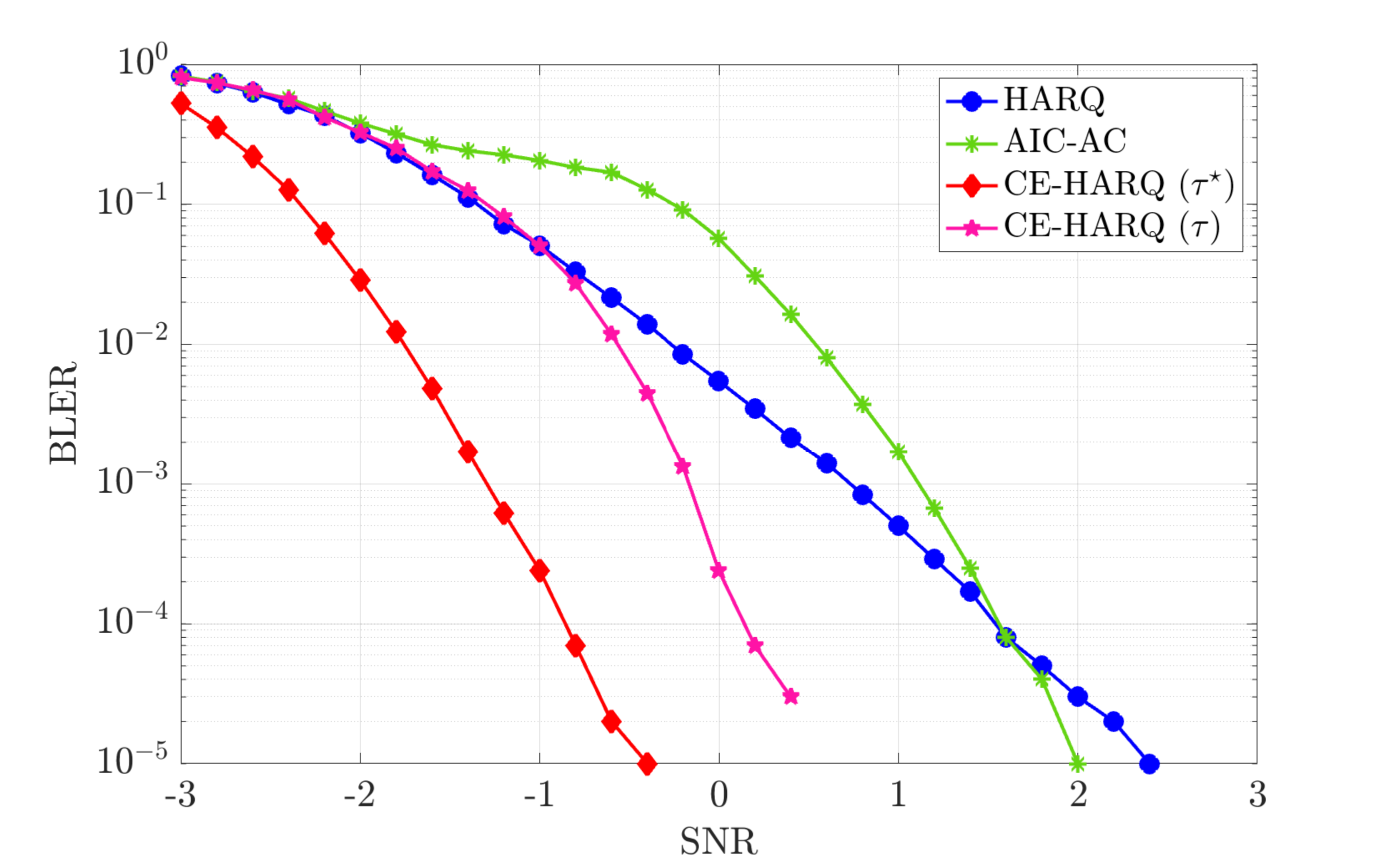}
 	\label{fig:latency_comp_uncoded_abl}
}
\subfigure[]
{
    \centering
    \includegraphics[width=0.5\linewidth]{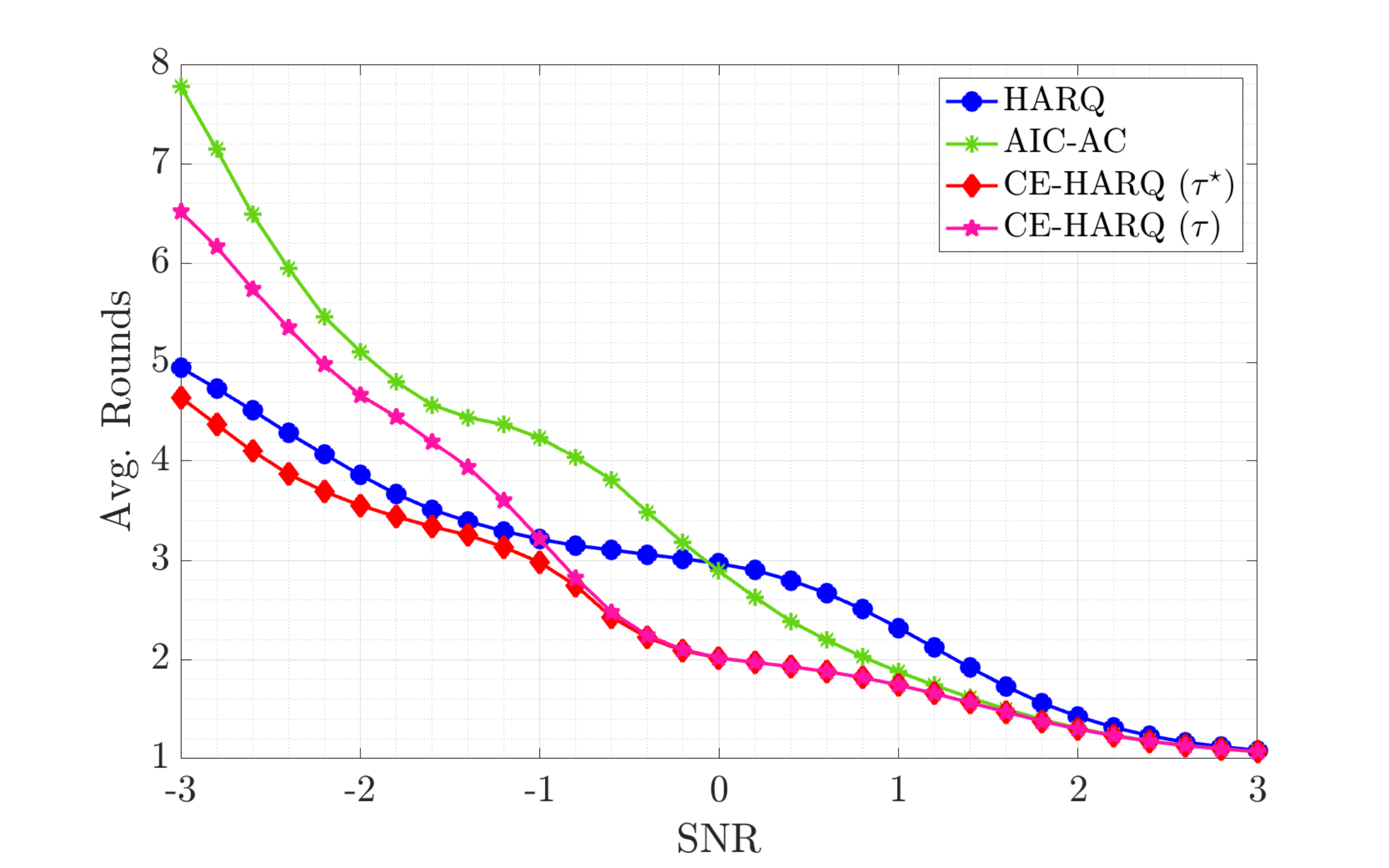}
 	\label{fig:latency_comp_conv_abl}
  }
  }
  \caption{Effect of choosing the sparsity threshold : Using a rate \sfrac{1}{2} convolutional code for PHY layer, CE-HARQ with optimal threshold $\tau^\star$ is significantly better than continuous compression ($\tau=1$) when : a) BLER is compared for maximum of 3 retransmissions b) Average rounds for error free transmission is compared. }
  \label{fig:latency_comp_abl}
\end{figure*}

\section{Source-Channel Separation for the transmission of error}\label{sec:src-ch separation}

Consider the following setup \figref{fig:source_channel}, where $\mathbf{U}$ denotes $K$ i.i.d. discrete sources available at the encoder, $\mathbf{S}$ denotes $K$ i.i.d. discrete sources that are correlated with $\mathbf{U}$ as $(U_i,S_i) \sim \text{i.i.d.\ } \Pr(u_i,s_i)$ for $i=1,\cdots,K$  
available at both the encoder and decoder as side information. The encoder maps $(\mathbf{U},\mathbf{S})$ to a length-N discrete sequence $\mathbf{X}$, which is corrupted by a noisy channel as $\mathbf{Y}$. The decoder maps $\mathbf{Y}$ to an estimate for $\mathbf{U}$, denoted by $\mathbf{V}$. This setup models the $i$-th round of communication from the transmitter to the receiver. The transmitter has a message $\mathbf{U} \in \{0,1\}^K$ and the receiver's estimated message as side information $\mathbf{S} = \mathbf{U} \oplus \mathbf{e}^{(i-1)}$; the receiver also has its own estimated message from the previous round of communication $\mathbf{S} = \mathbf{U} \oplus \mathbf{e}^{(i-1)}$ as side information.

\begin{figure}[ht]
    \centering
 	\includegraphics[width=0.6\linewidth]{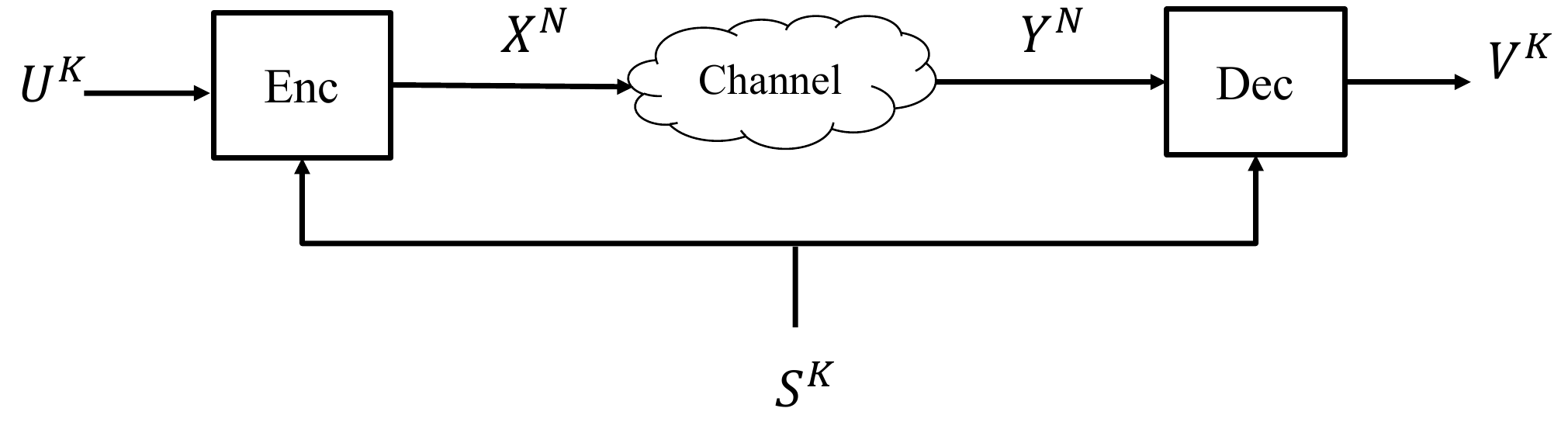}
 	\captionsetup{font=small}
 	\caption{The $i$-th round of forward communication can be viewed as joint source-channel coding with Tx/Rx side information, where $S^K = \hat{U}^{(i-1)}$ (previous message estimate) serves as Tx/Rx side info. Theorem 1 establishes the optimal rate-distortion, which is achieved by a separated source-channel coding.} 
 	\label{fig:source_channel}
 \end{figure}

\begin{theorem} 
Let $d(u,v)$ denote a distortion-metric. A rate-$K/N$ code achieves the distortion $D$ if and only if $K/N \leq C/R_\text{SI}(D)$, where $C = \max_{p(x)}I(X;Y)$ and 
\[
R_\text{SI}(D) = \min_{p(v|u,s): \E[d(u,v)] \le D} I(U;V|S). 
\]
\end{theorem}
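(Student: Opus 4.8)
The plan is to prove this as a \emph{source--channel separation} theorem, exploiting the fact that the side information $\mathbf{S}$ is available at \emph{both} the encoder and the decoder, which collapses the relevant source-coding rate to the conditional rate--distortion function $R_\text{SI}(D) = \min_{p(v|u,s): \E[d(u,v)] \le D} I(U;V|S)$. I would split the argument into a converse (the ``only if'' direction) and an achievability part (the ``if'' direction), and in both I would work with the single quantity $I(U^K; V^K \mid S^K)$, where I write $U^K, S^K$ for the length-$K$ source blocks and $X^N, Y^N, V^K$ for the channel input, channel output, and reconstruction.

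For the converse, suppose a rate-$K/N$ code achieves distortion $D$; I would sandwich $I(U^K; V^K \mid S^K)$ between a source lower bound and a channel upper bound. For the source bound, single-letterization gives $I(U^K; V^K \mid S^K) \ge \sum_{i=1}^{K} I(U_i; V_i \mid S_i)$, using that the pairs $(U_i, S_i)$ are i.i.d.\ and that conditioning reduces entropy; convexity and monotonicity of $R_\text{SI}(\cdot)$ together with the per-letter distortion constraint then yield $\sum_i I(U_i; V_i \mid S_i) \ge K\, R_\text{SI}(D)$. For the channel bound, the chain $U^K \to X^N \to Y^N \to V^K$ holds conditionally on $S^K$ (given $S^K = s^K$, both the encoder map and the decoder map become deterministic functions), so data processing gives $I(U^K; V^K \mid S^K) \le I(X^N; Y^N \mid S^K)$; since the memoryless channel acts on $X^N$ independently of $S^K$, the standard memoryless single-letterization yields $I(X^N; Y^N \mid S^K) \le \sum_i I(X_i; Y_i) \le NC$. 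Chaining the two bounds gives $K\, R_\text{SI}(D) \le NC$, i.e.\ $K/N \le C/R_\text{SI}(D)$.

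For achievability I would use a standard two-stage separate scheme. First, a conditional rate--distortion code (side information at both encoder and decoder, so no binning is needed) compresses $U^K$ given $S^K$ into roughly $K(R_\text{SI}(D) + \epsilon)$ bits while meeting distortion $D + \epsilon$. Second, a capacity-achieving channel code reliably carries this index over the $N$ channel uses whenever $K(R_\text{SI}(D) + \epsilon) \le N(C - \epsilon)$. Whenever $K/N < C/R_\text{SI}(D)$, choosing $\epsilon$ small makes this feasible, the channel decoding error probability vanishes, and the decoder reconstructs $V^K$ from the recovered index and $S^K$ at distortion $D + \epsilon$; letting $\epsilon \to 0$ closes the gap.

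The main obstacle I anticipate is handling the distortion under rare channel errors in the achievability direction and, relatedly, pinning down the boundary of the rate region: a channel decoding failure can push the per-block distortion to its maximum value $d_{\max}$, so I would bound the overall expected distortion by $D + \epsilon + P_e\, d_{\max}$ and drive $P_e \to 0$, then invoke continuity of $R_\text{SI}$ to pass from the strict inequality to the claimed ``$\le$'' at the boundary $K/N = C/R_\text{SI}(D)$. The conceptual crux---and precisely the reason side information at \emph{both} ends matters---is that it removes the need for a Wyner--Ziv binning argument, so the achievable source rate is exactly the conditional mutual information $I(U;V \mid S)$ rather than the (generally larger) Wyner--Ziv rate.
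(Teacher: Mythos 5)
Your proposal is correct and follows essentially the same route as the paper: the converse sandwiches $I(U^K;V^K\mid S^K)$ between the single-letterized source bound $K\,R_\text{SI}(D)$ and the channel bound $NC$ via data processing through the Markov chain induced by $S^K$, which is exactly the paper's argument. Your achievability discussion merely fleshes out (with the $P_e\,d_{\max}$ distortion bookkeeping and the no-binning observation) what the paper dispatches in one sentence as following immediately from separate source--channel coding.
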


\begin{proof}
\noindent Achievability follows immediately from the separate source-channel coding. Converse also closely follows the converse proof of the source-channel separation theorem for channels without side information. 

Suppose $\frac{1}{K}\E[d(\mathbf{U},\mathbf{V})] \le D$. Then, 
\begin{align}
    I(U^K;V^K|S^K) &=  \sum_{i=1}^K I(U_i; V^K|U^{i-1}, S^K) \nonumber\\
    &\stackrel{(a)}{=} \sum_{i=1}^K I(U_i; V^K,U^{i-1}|S^K) \nonumber\\
    &\ge  \sum_{i=1}^K I(U_i; V_i|S^K)\nonumber\\
    &\ge \sum_{i=1}^K I(U_i; V_i|S_i)\nonumber\\
    &\ge K R_\text{SI}(D) \label{proof-eq1}
\end{align}
where (a) holds since $U_i$ is a memoryless source. 

Now, by the data processing inequality, we have 
\begin{align}
\frac{1}{K} I(U^K;V^K|S^K) \le \frac{1}{K}I(U^K;Y^N|S^K).    \label{proof-eq2}
\end{align}
Formally, the inequality above holds since $U^K-(Y^N,S^K)-V^K$ forms a Markov Chain. 

Finally, from the converse for the communication over discrete memoryless channels, we have 
\begin{align}
I(U^K;Y^N|S^K) &= \sum_{i=1}^N I(U^K;Y_i|Y^{i-1},S^K)\nonumber\\
& \leq \sum_{i=1}^N I(U^K,Y^{i-1};Y_i|S^K)\nonumber\\
& \leq \sum_{i=1}^N I(X_i,U^K,Y^{i-1};Y_i|S^K)\nonumber\\
& \stackrel{(a)}{=} \sum_{i=1}^N I(X_i;Y_i|S^N)\nonumber\\
& \leq  \sum_{i=1}^N I(X_i;Y_i|S_i)\nonumber\\
&\leq  \sum_{i=1}^N I(X_i,S_i;Y_i)\nonumber\\
&\leq  \sum_{i=1}^N I(X_i;Y_i)\nonumber\\
&\leq N C, \label{proof-eq3}
\end{align}
where (a) holds since $(U^K,Y^{i-1}) - X_i - Y_i$ forms a Markov chain given $S^K$. By combining equations~\eqref{proof-eq1},\eqref{proof-eq2}, and \eqref{proof-eq3}, we complete the proof. 
\end{proof}

\end{appendices}
\end{document}